\documentclass[journal]{IEEEtran}
\usepackage[colorlinks,linkcolor=blue,anchorcolor=blue,citecolor=blue,bookmarks=true]{hyperref}  
\usepackage[T1]{fontenc}
\ifCLASSINFOpdf
\usepackage[pdftex]{graphicx} 
\DeclareGraphicsExtensions{.eps,.pdf,.jpeg,.png}
\else
\usepackage[dvips]{graphicx}
\fi
\usepackage[compress,nospace]{cite}
\usepackage[cmex10]{amsmath}
\usepackage{epstopdf,amsthm,stfloats,siunitx,amssymb,wasysym,algorithm,algorithmic,array,url,color,subfigure} 
\usepackage{footnote}
\makesavenoteenv{tabular}
\interdisplaylinepenalty=2500
\hyphenation{op-tical net-works semi-conduc-tor}
\newtheorem{theorem}{Proposition}

\newcommand{\tabincell}[2]{\begin{tabular}{@{}#1@{}}#2\end{tabular}}
\usepackage{soul}

\begin{document}

\title{Reconfigurable Intelligent Surface Aided Sparse DOA Estimation Method With Non-ULA}

\author{Peng~Chen,~\IEEEmembership{Member,~IEEE},
Zihan~Yang,
Zhimin~Chen,~\IEEEmembership{Member,~IEEE}, 
Ziyu~Guo,~\IEEEmembership{Member,~IEEE}
\thanks{This work was supported in part by the  National Natural Science Foundation of China (Grant No. 61801112), the Natural Science Foundation of Jiangsu Province (Grant No. BK20180357), the National Key R\&D Program of China (Grant No. 2019YFE0120700), and  the Open Program of State Key Laboratory of Millimeter Waves at Southeast University (Grant No. K202029). \textit{(Corresponding author: Peng Chen)}}
\thanks{P.~Chen and Z. Yang are with the State Key Laboratory of Millimeter Waves, Southeast University, Nanjing 210096, China (email: \{chenpengseu, yangzihan\}@seu.edu.cn).}
\thanks{Z.~Chen is with the School of Electronic and Information, Shanghai Dianji University, Shanghai 201306, China (email: chenzm@sdju.edu.cn).}
\thanks{Z.~Guo is with the State Key Laboratory of ASIC and System, Fudan University, Shanghai 201203, China (email: zguo@fudan.edu.cn).}
}

\markboth{IEEE Signal Processing Letters}%
{Shell \MakeLowercase{\textit{et al.}}: Bare Demo of IEEEtran.cls for Journals}

\maketitle

\begin{abstract}
The direction of arrival (DOA) estimation problem is addressed in this paper. A reconfigurable intelligent surface (RIS) aided system for the DOA estimation is proposed. Unlike traditional DOA estimation systems, a low-cost system with only one complete functional receiver is given by changing the phases of the reflected signals at the RIS elements to realize the multiple measurements. Moreover, an atomic norm-based method is proposed for the DOA estimation by exploiting the target sparsity in the spatial domain and solved by a semi-definite programming (SDP) method. Furthermore, the RIS elements can be any geometry array for practical consideration, so a transformation matrix is formulated and different from the conventional SDP method. Simulation results show that the proposed method can estimate the DOA more accurately than the existing methods in the non-uniform linear RIS array.
\end{abstract}

\begin{IEEEkeywords}
	DOA estimation, reconfigurable intelligent surface, atomic norm, sparse reconstruction, non-uniform linear  array.
\end{IEEEkeywords}

\section{Introduction} \label{sec1}
Recently, reconfigurable intelligent surface (RIS)~\cite{9117136,9034159} has been studied widely in the wireless communication, signal processing, and radar fields~\cite{8941126,9133157}. The RIS can reflect the signal and change the amplitude or the phase of the received signal. Based on this characteristic, the RIS can be used to expand signal coverage. Ref.~\cite{9247315} shows that the RIS-assisted systems can improve the wireless communication performance in indoor and outdoor scenarios. For the practical case with only a limited number of discrete phase shifts, a hybrid beamforming scheme is proposed, and the sum-rate maximization problem is formulated in~\cite{9110889}. For the unmanned aerial vehicle (UAV) communication, a joint UAV trajectory and RIS's passive beamforming method are given in~\cite{8959174} to maximize the average achievable data rate.

To obtain the signal direction, the direction of arrival (DOA) estimation methods have been studied for decades~\cite{9146196,7903732,9411879}. The sparse reconstruction-based methods have been proposed to improve the estimation performance~\cite{7314978,9296231,9016105}. The RIS is also used for the DOA estimation. In~\cite{9173575}, SBLNet as a deep network architecture is proposed for the DOA estimation in the autonomous vehicles with RIS. In~\cite{9354904}, a channel estimation problem is considered in a millimeter-wave (mmWave) multiple-input and multiple-output (MIMO) system with RIS, and the channel estimation method is proposed based on a multidimensional DOA estimation method. Then, the corresponding DOA estimation method is compared with the existing methods.

In this paper, for the DOA estimation, a low-cost system is formulated using RIS with only one complete functional receiver. Additionally, different from the traditional uniform linear array (ULA), a non-uniform linear RIS  array (NULRA) is considered, where the RIS elements cannot be at ideal positions for the practical environment.   Then, an atomic norm-based method is proposed for the DOA estimation by exploiting the target sparsity in the space domain and is solved efficiently by a semi-definite programming (SDP) method. Different from the conventional SDP method, a transformation matrix is proposed in the SDP to solve the geometry problem caused by the NULRA. Simulation results are carried out and compared with the existing estimation methods. The main contributions of this paper are given as follows: We consider a system model to estimate the DOA of targets using RIS, where an ideal ULA cannot be formulated on the surface of a practical building. In the existing atomic norm-based methods, since a Toeplitz matrix is formulated, only the ULA can be used for the DOA estimation, limiting the application of atomic norm-based methods. We introduce a transforming matrix and extend the atomic norm-based method to the scenario with NULRA. Based on the transformation matrix, an SDP method is formulated from the atomic norm-based optimization problem, and the details in obtaining the SDP problem are provided.

The remainder of this paper is organized as follows. The NULRA model for DOA estimation is given in Section~\ref{sec2}. Then, the atomic norm-based DOA estimation method is proposed in Section~\ref{sec3}. Simulation results are given in Section~\ref{sec4}, and Section~\ref{sec5} concludes the paper. 

\textit{Notations:} Upper-case and lower-case boldface letters denote matrices and column vectors, respectively.  $(\cdot)^\text{T}$ and $(\cdot)^\text{H}$ are the transpose and the  Hermitian transpose of a matrix, respectively.  $\operatorname{diag}\{\boldsymbol{a}\}$ denotes a diagonal matrix with the diagonal elements from $\boldsymbol{a}$. $\otimes$ denoted the Kronecker product. $\text{Tr}\{\cdot\}$ is the trace of a matrix. $\text{vec}\{\boldsymbol{A}\}$ denotes the vectorization of $\boldsymbol{A}$. $\boldsymbol{A}^\dagger$ is the Moore-Penrose inverse of a matrix $\boldsymbol{A}$, and is defined as $\boldsymbol{A}^\dagger\triangleq (\boldsymbol{A}^{\text{H}}\boldsymbol{A})^{-1}\boldsymbol{A}^{\text{H}}$.

\section{NULRA Model for DOA Estimation}\label{sec2}
\begin{figure}
	\centering
	\includegraphics[width=2in]{./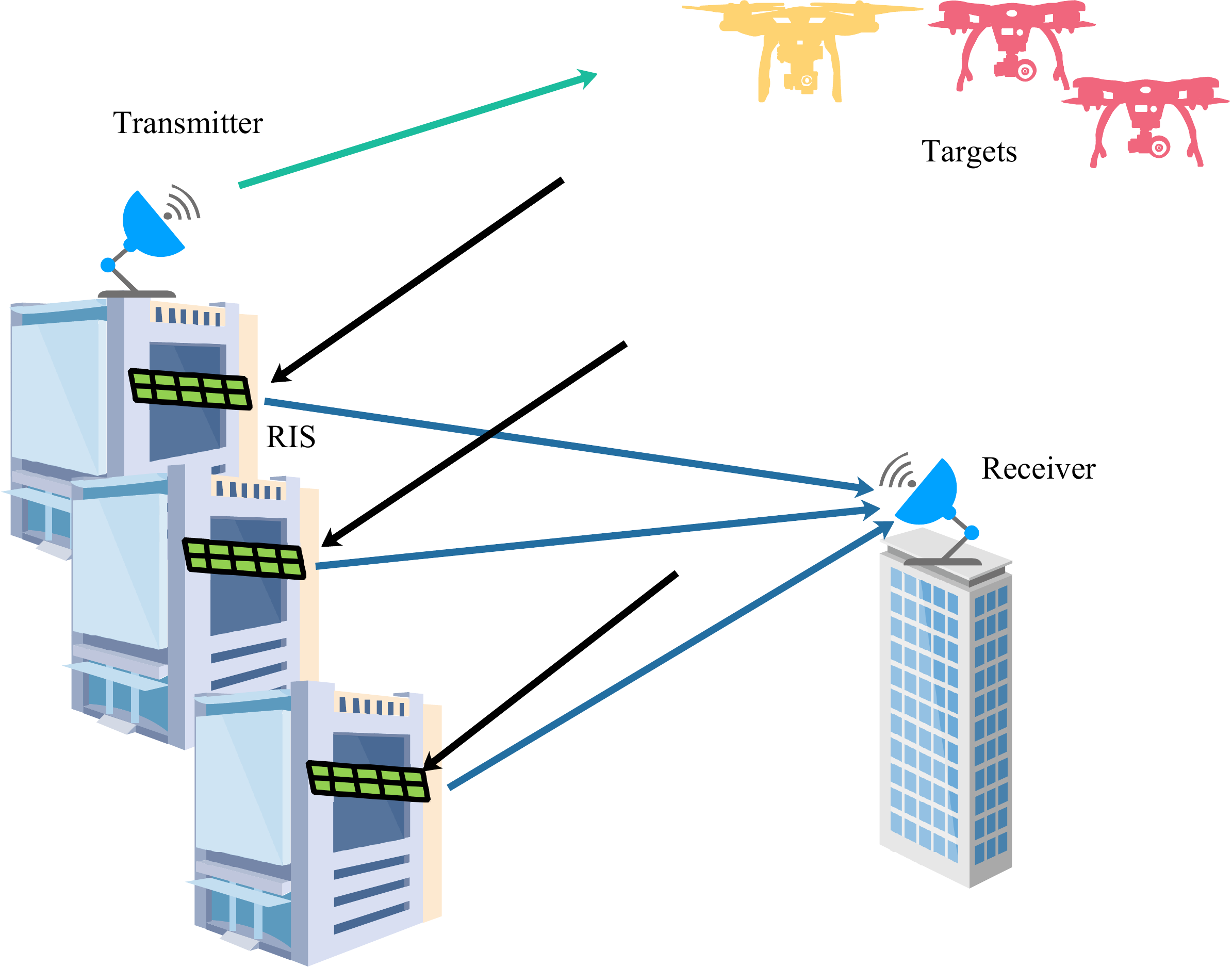}
	\caption{The system model of DOA estimation using NULRA.}
	\label{system}
\end{figure} 

This paper addresses the DOA estimation problem in the system using RIS, reducing the cost significantly via the convenient phase control. With a varactor diode, the phase of the reflected signal from the RIS can be changed easily. Hence, we propose a low-cost system using RIS to estimate the directions of low-speed targets, where the RIS can be placed in a non-uniform linear array, as shown in Fig.~\ref{system}. Additionally, the sparsity of the targets is exploited to improve the DOA estimation performance. 

For a low-cost directing finding system, we use $N$ RIS elements to form an array, where one dimension is considered. The system can be easily extended to the two dimensions of the DOA estimation problem. We try to form a ULA with the distance between the adjacent RIS elements half wavelength. Since the RIS can be placed on any surface, the practical array can differ from the ULA. The position of the $n$-th RIS element is denoted as $p_n$ ($n=0,1,\dots, N-1$), and can be expressed as $
	p_n = \frac{n\lambda}{2}+\tilde{p}_n$, where $\lambda$ denotes the wavelength, and $\tilde{p}_n$ is the difference between the ULA and the practical array.

The reflected signal in the $n$-th RIS element during the $m$-th ($m=0,1,\dots, M-1$) measurement slot can be expressed as 
\begin{align}
	x_{n,m} = A_{n,m}e^{j\phi_{n,m}}\sum^{K-1}_{k=0}s_ke^{j\frac{2\pi p_n}{\lambda}\sin\theta_k},
\end{align} 
where the number of targets is $K$, the direction of the $k$-th target is $\theta_k$, and the signal from the $k$-th target is $s_k$. $A_{n,m}$ and $\phi_{n,m}$ denote the reflection amplitude and phase at the $n$-th RIS element, respectively. The reflected signals are received by a single antenna system at the direction of $\alpha$, and can be written as
\begin{align}
	r_m = \sum^{N-1}_{n=0} x_{n,m} e^{j\frac{2\pi p_n}{\lambda}\sin\alpha}+w_m,
\end{align}
where $w_m$ denotes the additive white Gaussian noise (AWGN).

Collect the received signals into a vector, and we have
\begin{align}
	\boldsymbol{r}&\triangleq 
	\begin{bmatrix}
		r_0, r_1, \dots, r_{M-1}
	\end{bmatrix}^{\text{T}}\label{eq4}\\
	& = \sum^{N-1}_{n=0} e^{j\frac{2\pi p_n}{\lambda}\sin\alpha} \left(\sum^{K-1}_{k=0}s_k e^{j\frac{2\pi p_n}{\lambda}\sin\theta_k}\right)\boldsymbol{b}_n+\boldsymbol{w}\notag\\ 
& = \boldsymbol{B}\text{diag}\{ \boldsymbol{a}(\alpha, \boldsymbol{p}) \} \boldsymbol{A}(\boldsymbol{\theta}, \boldsymbol{p})\boldsymbol{s} +\boldsymbol{w}, \notag	
\end{align}
where we denote $
	\boldsymbol{w} \triangleq\begin{bmatrix}
		w_0,w_1,\dots,w_{M-1}
	\end{bmatrix}^{\text{T}}$, $ 
	\boldsymbol{s} \triangleq\begin{bmatrix}
		s_0,s_1,\dots,s_{K-1}
	\end{bmatrix}^{\text{T}}$, $ 
		\boldsymbol{\theta} \triangleq\begin{bmatrix}
		\theta_0,\theta_1,\dots,\theta_{K-1}
	\end{bmatrix}^{\text{T}}$, $
	\boldsymbol{b}_n \triangleq\begin{bmatrix}
		A_{n,0}e^{j\phi_{n,0}}, A_{n,1}e^{j\phi_{n,1}},\dots, A_{n,M-1}e^{j\phi_{n,M-1}}
	\end{bmatrix}^{\text{T}}$, $ 
	\boldsymbol{B}  \triangleq \begin{bmatrix}
		\boldsymbol{b}_0,\boldsymbol{b}_1,\dots, \boldsymbol{b}_{N-1}
	\end{bmatrix}$. Additionally, we define the steering matrix as $	\boldsymbol{A}(\boldsymbol{\theta}, \boldsymbol{p})\triangleq e^{j\frac{2\pi}{\lambda}\boldsymbol{p}\sin^{\text{T}}\boldsymbol{\theta}}$, and the steering vector is defined as $ 	\boldsymbol{a}(\theta, \boldsymbol{p})\triangleq e^{j\frac{2\pi}{\lambda}\boldsymbol{p}\sin \theta}$. 

Finally, from the system model (\ref{eq4}), we try to estimate the direction $\boldsymbol{\theta}$ from the received signal $\boldsymbol{r}$ in the scenario with NULRA $\boldsymbol{p}$. Moreover, only one receiving channel is used so that the system cost can be reduced significantly. To estimate the directions of multiple targets, the multiple measurements are realized by the measurement matrix $\boldsymbol{B}$. 

\section{Atomic Norm-Based DOA Estimation Method}\label{sec3}
To estimation the DOA $\boldsymbol{\theta}$ from the received signal $\boldsymbol{r}$ using the low-cost RIS system, we propose an atomic norm-based method to exploit the target sparsity in the spatial domain. Unlike traditional methods based on the atomic norm, the ULA is considered to formulate the polynomial, and the DOA is estimated from the peak values of the polynomial. We try to propose a new type of atomic norm-based method for the NULRA and multiple measurements.

First, we define an atomic norm as  $\|\boldsymbol{x}\|_{A}\triangleq \inf\bigg\{\sum_n a_n: \boldsymbol{x}=\sum_n a_ne^{j\phi_n} \boldsymbol{a}(\theta_n, \boldsymbol{p}), 
\phi\in[0,2\pi), a_n\geq 0\bigg\}$. Then, from the system model (\ref{eq4}), the sparse reconstruction problem can be expressed as
\begin{align}
	\min_{\boldsymbol{x}} \frac{1}{2}\|\boldsymbol{r}-\boldsymbol{B}\text{diag}\{\boldsymbol{a}(\alpha, \boldsymbol{p})\}\boldsymbol{x}\|^2_2+\gamma \|\boldsymbol{x}\|_A,\label{eq13}
\end{align}
where $\gamma$ is used to achieve a tradeoff between the reconstruction accuracy and the sparsity. To solve the sparse reconstruction problem (\ref{eq13}), we have the following proposition.
\begin{theorem}\label{pro1}
The sparse reconstruction problem (\ref{eq13}) with the atomic norm can be expressed as the following SDP problem
\begin{align}
\min_{\boldsymbol{q},\boldsymbol{G}}& \quad  \left \|\boldsymbol{r}-(\boldsymbol{T}^{\text{H}}\boldsymbol{C}^{\text{H}})^{\dagger}\boldsymbol{q} \right\|^2_2   \notag\\
\text{s.t.}& \quad \begin{bmatrix}\boldsymbol{G} & \boldsymbol{q}\\ \boldsymbol{q}^{\text{H}} & u\end{bmatrix}\succeq 0\text{, } \text{Tr}\{\boldsymbol{G}\}=\gamma^2/u
\label{eq14}\\
&\quad \boldsymbol{G}\in\mathbb{C}^{N\times N} \text{ is a Hermitian matrix}\notag\\
&\quad \sum_n G_{n, n+n'} = 0,\quad n'=1-N,\dots,N-1.\notag
\end{align}
\end{theorem}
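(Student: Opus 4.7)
The plan is to follow the classical atomic-norm-to-SDP recipe based on the Vandermonde decomposition of positive semidefinite Hermitian Toeplitz matrices (as in the ULA line-spectral setting), and then to push the array-geometry information into an explicit transformation matrix so that the non-uniform geometry of the NULRA is absorbed into a linear algebraic object rather than into the atomic set itself.

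First I would let $\boldsymbol{C}\triangleq \boldsymbol{B}\operatorname{diag}\{\boldsymbol{a}(\alpha,\boldsymbol{p})\}$, which reduces the data-fit term in (\ref{eq13}) to $\tfrac{1}{2}\|\boldsymbol{r}-\boldsymbol{C}\boldsymbol{x}\|_2^2$. Using the decomposition $p_n=n\lambda/2+\tilde{p}_n$, each NULRA steering vector $\boldsymbol{a}(\theta,\boldsymbol{p})$ factors as a perturbation of the corresponding ULA Vandermonde vector, which motivates introducing a transformation matrix $\boldsymbol{T}$ that maps ULA atoms to NULRA atoms on a sufficiently fine angular grid. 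Then $\boldsymbol{x}$ admits a representation $\boldsymbol{x}=\boldsymbol{T}\boldsymbol{y}$ with $\boldsymbol{y}$ expressible in the ULA atomic set, so $\|\boldsymbol{x}\|_A$ can be rewritten via the ULA atomic norm of $\boldsymbol{y}$.

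Second, I would invoke the standard SDP characterization of the ULA atomic norm in terms of a Hermitian Toeplitz matrix $\boldsymbol{U}$ and a slack scalar $t$, arranged in a $2{\times}2$ block LMI together with $\boldsymbol{y}$. After absorbing the map $\boldsymbol{T}^{\text{H}}\boldsymbol{C}^{\text{H}}$ and eliminating $\boldsymbol{y}$ in favor of a new variable $\boldsymbol{q}$ (whose preimage is recovered through the Moore-Penrose inverse $(\boldsymbol{T}^{\text{H}}\boldsymbol{C}^{\text{H}})^{\dagger}$), the Toeplitz constraint on $\boldsymbol{U}$ is conjugated by $\boldsymbol{T}$ and reappears on $\boldsymbol{G}$ as the linear system $\sum_n G_{n,n+n'}=0$, which expresses orthogonality of $\boldsymbol{G}$ to each off-diagonal shift subspace. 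Folding $\gamma$ and the scalar $t$ into the single slack $u$ via the homogeneity of the LMI (rescaling both block rows and columns by a common factor) then yields the trace normalization $\operatorname{Tr}\{\boldsymbol{G}\}=\gamma^2/u$.

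The main obstacle, I expect, is keeping the transformation $\boldsymbol{T}$ compatible with the pseudo-inverse step: one must verify that $\boldsymbol{T}^{\text{H}}\boldsymbol{C}^{\text{H}}$ has full column rank (or at least that $\boldsymbol{q}$ lies in its range so that $(\boldsymbol{T}^{\text{H}}\boldsymbol{C}^{\text{H}})^{\dagger}\boldsymbol{q}$ reconstructs a valid $\boldsymbol{x}$), and that the algebraic change of variables sends the Toeplitz subspace cleanly onto the prescribed sum-of-diagonals subspace. Once those compatibility issues are settled, the remainder is routine Schur-complement manipulation of the LMI.
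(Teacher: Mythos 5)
Your proposal goes in the wrong direction: it stays on the primal side, while the SDP in the proposition is (a rewriting of) the Lagrangian dual of (\ref{eq13}). The paper's proof introduces $\boldsymbol{y}=\boldsymbol{C}\boldsymbol{x}$, forms the Lagrangian, and reduces the dual to $\min_{\boldsymbol{z}}\frac{1}{2}\|\boldsymbol{r}-\boldsymbol{z}\|_2^2$ subject to the dual-norm bound $\|\boldsymbol{C}^{\text{H}}\boldsymbol{z}\|_{\tilde{A}}\le\gamma$, which equals $\sup_{\theta}|\boldsymbol{a}^{\text{H}}(\theta,\boldsymbol{p})\boldsymbol{C}^{\text{H}}\boldsymbol{z}|\le\gamma$. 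The variables in (\ref{eq14}) are dual-certificate objects: $\boldsymbol{q}=\boldsymbol{T}^{\text{H}}\boldsymbol{C}^{\text{H}}\boldsymbol{z}$ (whence $\boldsymbol{z}=(\boldsymbol{T}^{\text{H}}\boldsymbol{C}^{\text{H}})^{\dagger}\boldsymbol{q}$ in the objective), and $\boldsymbol{G}$ is a Hermitian matrix whose linear constraints $\sum_n G_{n,n+n'}=0$ for $n'\neq 0$ and $\text{Tr}\{\boldsymbol{G}\}=\gamma^2/u$ force the trigonometric polynomial $\boldsymbol{a}^{\text{H}}(\theta,\bar{\boldsymbol{p}})\boldsymbol{G}\boldsymbol{a}(\theta,\bar{\boldsymbol{p}})$ to be identically equal to $\gamma^2/u$; the Schur-complement condition of the LMI then gives $|\boldsymbol{a}^{\text{H}}(\theta,\bar{\boldsymbol{p}})\boldsymbol{q}|^2\le u\,\boldsymbol{a}^{\text{H}}(\theta,\bar{\boldsymbol{p}})\boldsymbol{G}\boldsymbol{a}(\theta,\bar{\boldsymbol{p}})=\gamma^2$ for every $\theta$, which, via $\boldsymbol{T}\boldsymbol{a}(\theta,\bar{\boldsymbol{p}})=\boldsymbol{a}(\theta,\boldsymbol{p})$, is exactly the dual-norm constraint. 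This certification step --- bounding the modulus of a dual polynomial uniformly in $\theta$ through vanishing diagonal sums --- is the heart of the proof, and it is absent from your sketch.

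Concretely, two of your steps fail. First, the primal Toeplitz characterization you invoke (Hermitian Toeplitz $\boldsymbol{U}$, slack $t$, LMI containing $\boldsymbol{y}$) yields an SDP in which the trace of $\boldsymbol{U}$ appears \emph{in the objective} weighted by $\gamma$, with $\boldsymbol{U}$ constrained to be Toeplitz; no rescaling ``by homogeneity'' converts an objective term into the equality constraint $\text{Tr}\{\boldsymbol{G}\}=\gamma^2/u$, and conjugating a Toeplitz constraint by $\boldsymbol{T}^{\text{H}}\boldsymbol{C}^{\text{H}}$ (or anything else) does not produce the annihilation conditions $\sum_n G_{n,n+n'}=0$: constancy along diagonals and vanishing of diagonal sums are different linear subspaces, and they play different roles (Vandermonde decomposition of a primal Gram-type matrix versus certification that a dual polynomial is bounded). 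Second, eliminating the primal vector $\boldsymbol{y}$ in favor of $\boldsymbol{q}$ through the pseudo-inverse is not available in your route: in the primal LMI the vector coupled to the PSD block is $\boldsymbol{y}$ itself, whereas in (\ref{eq14}) the vector $\boldsymbol{q}$ inside the LMI is the transformed \emph{dual} variable. To arrive at the stated SDP you must pass through the dual as the paper does. Your flagged concern about the rank/range compatibility of $\boldsymbol{T}^{\text{H}}\boldsymbol{C}^{\text{H}}$ is legitimate (the paper glosses over it as well), but it is secondary to this structural mismatch.
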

\begin{proof}
The sparse reconstruction problem can be rewritten as
\begin{align}
	\min_{\boldsymbol{x},\boldsymbol{y}}  \frac{1}{2}\|\boldsymbol{r}-\boldsymbol{y}\|^2_2+\gamma \|\boldsymbol{x}\|_A,\qquad  
	\text{s.t. } \boldsymbol{y}=\boldsymbol{C}\boldsymbol{x},\notag
\end{align}
where we define $\boldsymbol{C}\triangleq \boldsymbol{B}\text{diag}\{\boldsymbol{a}(\alpha, \boldsymbol{p})\}$.
We can obtain a Lagrangian function as
\begin{align}
	f(\boldsymbol{x},\boldsymbol{y}, \boldsymbol{z})= &\frac{1}{2}\|\boldsymbol{r}-\boldsymbol{y}\|^2_2+\gamma \|\boldsymbol{x}\|_A  +\langle\boldsymbol{z}, \boldsymbol{y}-\boldsymbol{C}\boldsymbol{x} \rangle.
\end{align}
Then, the dual problem of the optimization problem (\ref{eq13}) can be expressed as
\begin{align}
\max_{\boldsymbol{z}}\min_{\boldsymbol{x}, \boldsymbol{y}} f(\boldsymbol{x}, \boldsymbol{y}, \boldsymbol{z}),\label{eq17}
\end{align}
where we have $
\min_{\boldsymbol{x}, \boldsymbol{y}} f(\boldsymbol{x}, \boldsymbol{y}, \boldsymbol{z}) = \min_{ \boldsymbol{y}}\frac{1}{2}\|\boldsymbol{r}-\boldsymbol{y}\|^2_2 +\langle\boldsymbol{z}, \boldsymbol{y} \rangle+ \min_{\boldsymbol{x}} \gamma \|\boldsymbol{x}\|_A  -\langle\boldsymbol{z}, \boldsymbol{C}\boldsymbol{x} \rangle$. 
Take the gradient of the first part to $0$, and we have the optimal value of $\boldsymbol{y}$ is $(\boldsymbol{r}-\boldsymbol{z})$.  Substitute it into (\ref{eq17}), and we have
\begin{align}
\max_{\boldsymbol{z}}\bigg\{\frac{1}{2}\big(\|\boldsymbol{r}\|^2_2-\|\boldsymbol{r}-\boldsymbol{z}\|^2_2 \big) -\max_{\boldsymbol{x}}\big\{ \langle\boldsymbol{z}, \boldsymbol{C}\boldsymbol{x} \rangle - \gamma \|\boldsymbol{x}\|_A\big\} \bigg\}.
\end{align}
 
Since the dual norm of the atomic norm is defined as $\|\boldsymbol{x}\|_{\tilde{A}}\triangleq \sup_{\|\boldsymbol{g}\|_A\leq 1}\langle \boldsymbol{x}, \boldsymbol{g}\rangle$, we have
\begin{align}
\min_{\boldsymbol{z}}  \frac{1}{2} \|\boldsymbol{r}-\boldsymbol{z}\|^2_2, \qquad 
\text{s.t.} \|\boldsymbol{C}^{\text{H}}\boldsymbol{z}\|_{\tilde{A}}\leq \gamma.
\end{align} 
 
The constrain can be simplified as
\begin{align}
\|\boldsymbol{C}^{\text{H}}\boldsymbol{z}\|_{\tilde{A}} & = \sup_{\|\boldsymbol{g}\|_A\leq 1}\langle \boldsymbol{C}^{\text{H}}\boldsymbol{z}, \boldsymbol{g}\rangle\notag\\
& = \sup_{\substack{\sum_n a_n\leq 1, \phi\in[0,2\pi),\\
		 a_n\geq 0,\theta_n\in(-\frac{\pi}{2},\frac{\pi}{2}]}} \sum_n\mathcal{R}\left\{ a_ne^{-j\phi_n} \boldsymbol{a}^{\text{H}}(\theta_n, \boldsymbol{p})\boldsymbol{C}^{\text{H}}\boldsymbol{z}\right\}\notag\\
& =  \sup_{\substack{\sum_n a_n\leq 1, a_n\geq 0,\\
		\theta_n\in(-\frac{\pi}{2},\frac{\pi}{2}]}} \sum_n a_n| \boldsymbol{a}^{\text{H}}(\theta_n, \boldsymbol{p})\boldsymbol{C}^{\text{H}}\boldsymbol{z}|\notag\\
& =  \sup_{\theta\in(-\pi/2,\pi/2]}| \boldsymbol{a}^{\text{H}}(\theta, \boldsymbol{p})\boldsymbol{C}^{\text{H}}\boldsymbol{z}|.\label{eq22}
\end{align} 
For a Hermitian matrix $\begin{bmatrix}\boldsymbol{G} & \boldsymbol{q}\\ \boldsymbol{q}^{\text{H}} & u\end{bmatrix}$, it is a  positive semi-definite matrix, if and only if we have $
\boldsymbol{G}\succeq 0$ and $
u\boldsymbol{G}- \boldsymbol{qq}^{\text{H}}\succeq 0$. 
Hence, for any steering vector $\boldsymbol{a}(\theta, \bar{\boldsymbol{p}})$, we have
\begin{align}
|\boldsymbol{a}^{\text{H}}(\theta, \bar{\boldsymbol{p}})\boldsymbol{q}|^2\leq u\boldsymbol{a}^{\text{H}}(\theta, \bar{\boldsymbol{p}})\boldsymbol{G}\boldsymbol{a}(\theta, \bar{\boldsymbol{p}}),\label{eq25}
\end{align}
where the $n$-th entry of $\bar{\boldsymbol{p}}$ is $\bar{q}_n = n\lambda/2$.

With (\ref{eq22}) and (\ref{eq25}), if we set $\boldsymbol{q}=\boldsymbol{T}^{\text{H}}\boldsymbol{C}^{\text{H}}\boldsymbol{z}$ and $u\boldsymbol{a}^{\text{H}}(\theta, \bar{\boldsymbol{p}})\boldsymbol{G}\boldsymbol{a}(\theta, \bar{\boldsymbol{p}})\leq \gamma^2$, we can obtain $
|\boldsymbol{a}^{\text{H}}(\theta, \bar{\boldsymbol{p}})\boldsymbol{T}^{\text{H}}\boldsymbol{C}^{\text{H}}\boldsymbol{z}|^2\leq \gamma^2$, where a transformation matrix $\boldsymbol{T}$ is defined and satisfies $\boldsymbol{Ta}(\theta, \bar{\boldsymbol{p}})=\boldsymbol{a}(\theta, \boldsymbol{p})$. Therefore, the constraint is satisfied.

As a conclusion, the optimization can be rewritten as the following SDP problem
\begin{align}
\min_{\boldsymbol{z},\boldsymbol{G}}& \quad \frac{1}{2} \|\boldsymbol{r}-\boldsymbol{z}\|^2_2   \notag\\
\text{s.t.}& \quad \begin{bmatrix}\boldsymbol{G} & \boldsymbol{q}\\ \boldsymbol{q}^{\text{H}} & u\end{bmatrix}\succeq 0\\
&\quad \boldsymbol{G}\in\mathbb{C}^{N\times N} \text{ is a Hermitian matrix}\notag\\
&\quad \boldsymbol{q}=\boldsymbol{T}^{\text{H}}\boldsymbol{C}^{\text{H}}\boldsymbol{z}\text{, }\text{Tr}\{\boldsymbol{G}\}=\gamma^2/u\notag\\
&\quad \sum_n G_{n, n+n'} = 0,\quad n'=1-N,\dots,N-1,\notag
\end{align}
which can be rewritten as (\ref{eq14}) easily.
\end{proof}

With Proposition~\ref{pro1}, the atomic norm-based DOA estimation method is obtained in the scenario with NULRA. For the transformation matrix $\boldsymbol{T}$, we have $
\boldsymbol{T}\boldsymbol{A}(\boldsymbol{\theta},\bar{\boldsymbol{p}})=\boldsymbol{A}(\boldsymbol{\theta},\boldsymbol{p})$, and can be estimated as 
\begin{align}
\text{vec}\{\boldsymbol{T}\} = (\boldsymbol{A}^{\text{T}}(\boldsymbol{\theta},\bar{\boldsymbol{p}})\otimes \boldsymbol{I}_N)^{\dagger} \text{vec}\{\boldsymbol{A}(\boldsymbol{\theta},\boldsymbol{p})\},
\end{align}
where we have $
\text{vec}\{\boldsymbol{T}\boldsymbol{A}(\boldsymbol{\theta},\bar{\boldsymbol{p}})\}=(\boldsymbol{A}^{\text{T}}(\boldsymbol{\theta},\bar{\boldsymbol{p}})\otimes \boldsymbol{I}_N)\text{vec}\{\boldsymbol{T}\}$.

Using the CVX toolbox, the results of Proposition~\ref{pro1} can be obtained, and we denote the optimal value of $\boldsymbol{q}$ as $\hat{\boldsymbol{q}}$. Then, the DOA can be estimated from the peak values of the following polynomial  $
	f(\theta) = |\boldsymbol{a}^{\text{H}}(\theta, \bar{\boldsymbol{p}})\hat{\boldsymbol{q}}|^2$.

\section{Simulation Results}\label{sec4}
In this section, the simulation results of the DOA estimation method for the NULRA will be given, and the simulations are carried out in a personal computer with a 2.9 GHz Intel Core i5 processor and 8 GB 2133 MHz DDR3. Moreover, the Matlab code for the proposed method is available online (\url{https://github.com/chenpengseu/NonULA-ANM.git}). The simulation parameters are given in Table~\ref{table1}. The mean distance between the adjacent RIS elements is the half wavelength, with the standard derivation being $0.1$ (normalized by wavelength). Additionally, for the RIS, only $2$ phases (i.e., $\ang{0}$ and $\ang{180}$) can be controlled by the system in the measurement matrix. Multiple measurements instead of receiving channels are used to estimate the DOA in this paper, better estimation performance can be achieved by more measurement. As shown in (\ref{eq4}), the number of targets is $K$, and the number of measurements is $M$. For the sparse reconstruction, when the number of measurements is $\mathcal{O}(K\log K\log N)$, it is sufficient to guarantee exact DOA estimation with high probability~\cite{6576276}. 

Note that the switch time $T_s$ of a varactor is serval microseconds ($\mu$s), so with $M$ measurements, the target positions cannot change significantly. With the parameters in Table~\ref{table1}, we can find that the DOA resolution is greater than $\ang{0.1}$, so during the estimation measurement, the limit speed of the target is about $\frac{\ang{0.1}}{\ang{180}}\pi R/(MT_s)$, where the DOA changes are less than $\ang{0.1}$, the switch time $T_s$ is about $10$ $\mu$s and $R$ denotes the range between RIS and targets. Hence, we can find that the speed limit is about $5.4R$ m/s, which is suitable for most applications.

\begin{table}
	\renewcommand{\arraystretch}{1}
	\caption{Simulation parameters}
	\label{table1}
	\centering
	\begin{tabular}{cc}
		\hline
		  \textbf{Parameter} &  \textbf{Value} \\
		\hline 
		The number of RIS elements & $N=16$ \\
		The number of measurements & $M=32$ \\ 
		\tabincell{c}{The mean distance between\\ adjacent RIS elements} & $d=\lambda/2$\\
		The number of targets & $K=3$\\ 
		The directions of targets & $\boldsymbol{\theta} =[-\ang{30.345},\ang{0.789}, \ang{20.456}]$\\
		\tabincell{c}{The standard derivation of RIS position \\ (normalized by the wavelength)} & $0.1$\\
		\hline
	\end{tabular}
\end{table}

\begin{figure}
	\centering
	\includegraphics[width=2.5in]{./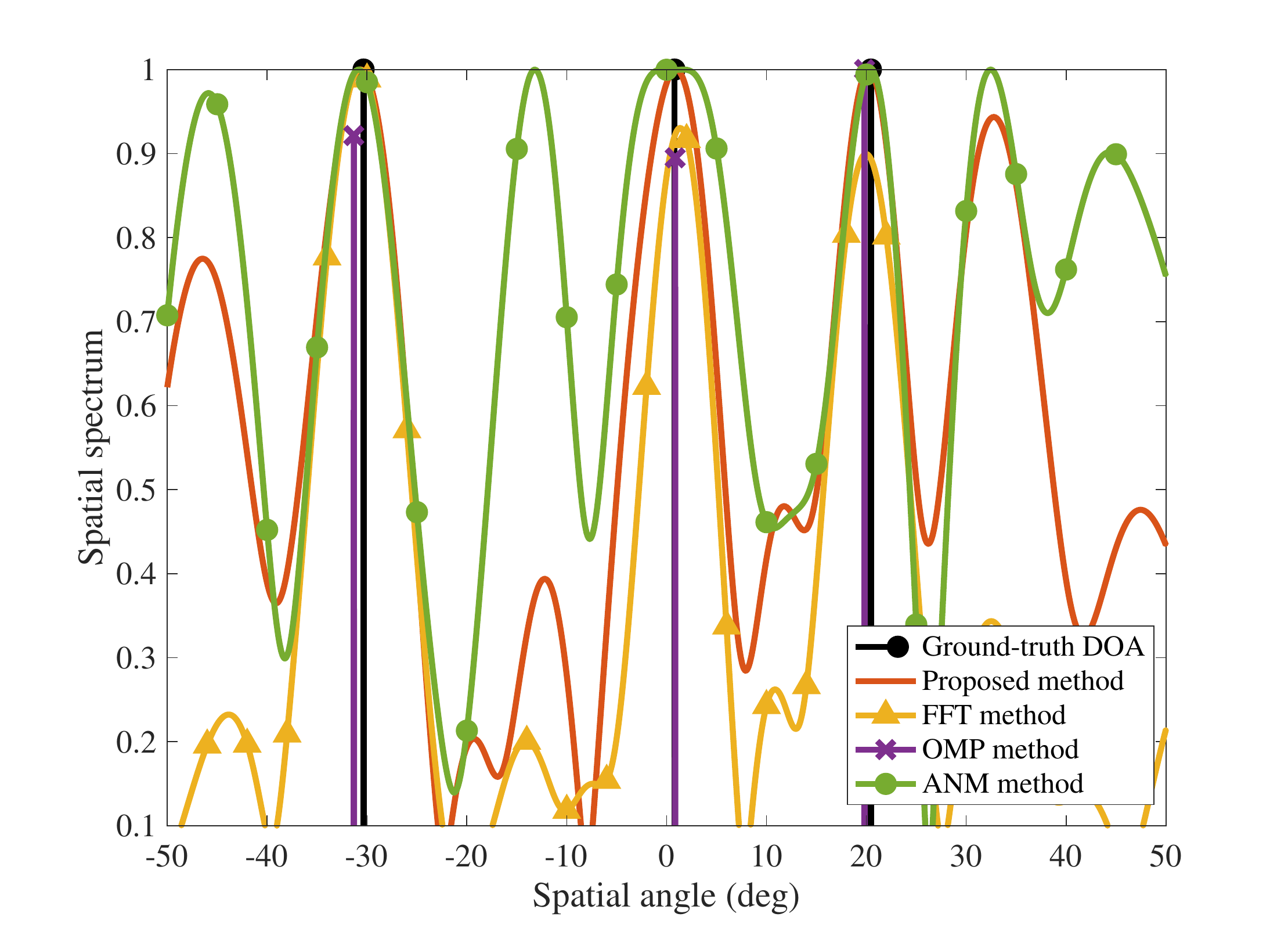}
	\caption{The spatial spectrum for DOA estimation.}
	\label{sp}
\end{figure} 

First, the spatial spectrum for the DOA estimation is given in Fig.~\ref{sp}. The proposed method is compared with the $3$ existing methods, i.e., the fast Fourier transformation (FFT)-based method, the conventional atomic norm minimization (ANM) method, and the orthogonal matching pursuit (OMP) method. As shown in this figure, the root-mean-square error (RMSE) of DOA estimation can be obtained. The RMSEs of the proposed method, the FFT-based method, the OMP method, and the ANM method are $0.218$, $0.482$, $0.673$, and $0.594$ in degree, respectively. The estimation error of the proposed method is much lower than existing methods, which shows the efficiency of the proposed method.

\begin{figure}
	\centering
	\includegraphics[width=2.2in]{./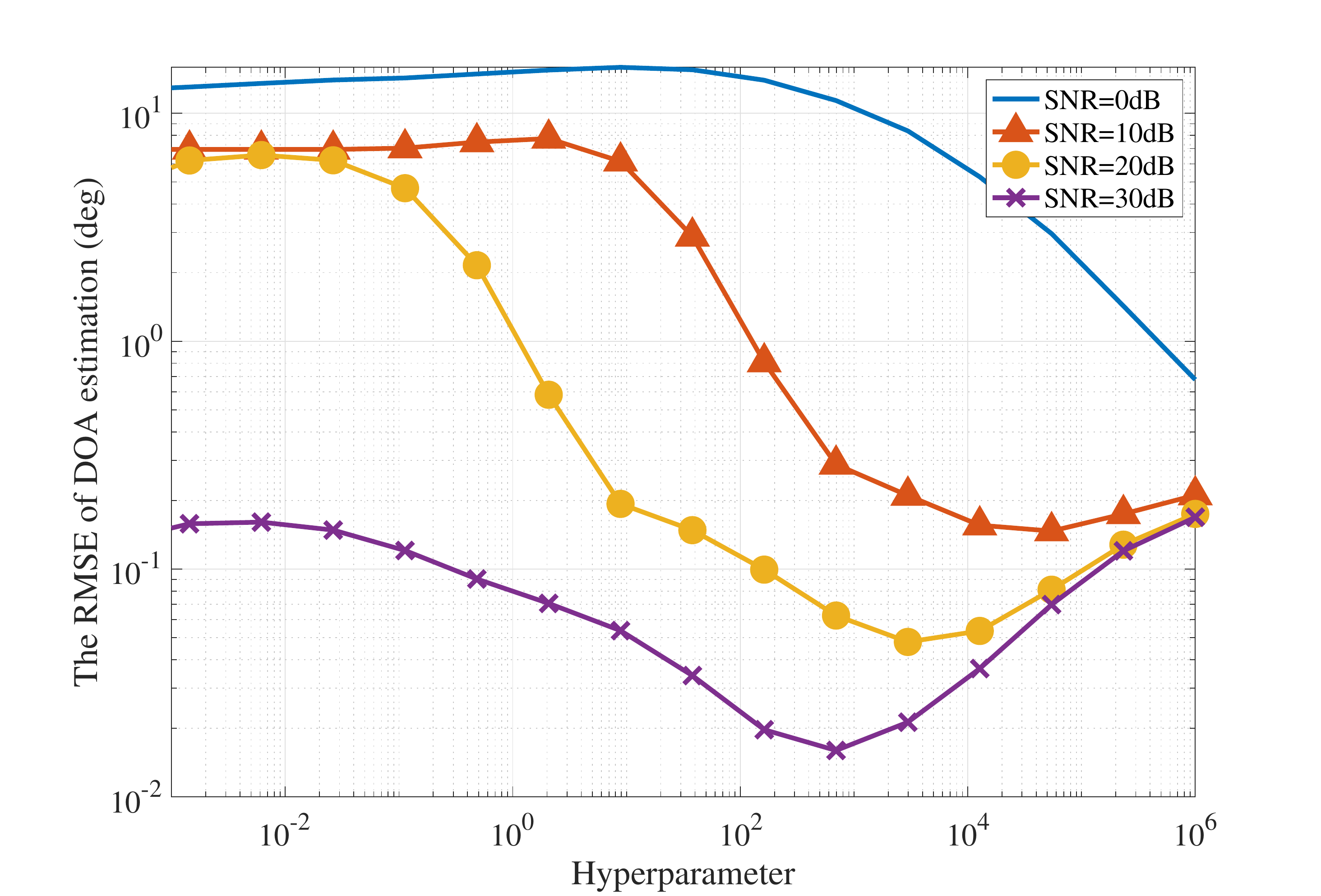}
	\caption{The RMSE of the DOA estimation with different hyperparameters.}
	\label{hyper}
\end{figure} 

Then, the DOA estimation performance with different hyperparameters is also shown in Fig.~\ref{hyper}. As shown in Proposition~\ref{pro1}, the hyperparameters $u$ and $\gamma$ are used. For the simplification, we set $u=1$ and change the hyperparameter $\gamma$. 
As shown in~\cite{7833233}, the value of hyperparameter can be set as $\gamma^2 \simeq 10^{0.1\zeta}MK\log MK$, where $\zeta$ is the SNR in dB. To obtain the exact hyperparameter, the simulation results are shown in Fig.~\ref{hyper} with the SNR of the received signal being $0$ dB, $10$ dB, $20$ dB, and $30$ dB. From the simulation results, we can find better estimation performance by choosing the appropriate value of the hyperparameter. As experience value, the hyperparameter can be chosen as $ 
	\gamma^2=10^{-0.096 \zeta+5.5722}.$

\begin{figure}
	\centering
	\includegraphics[width=2.3in]{./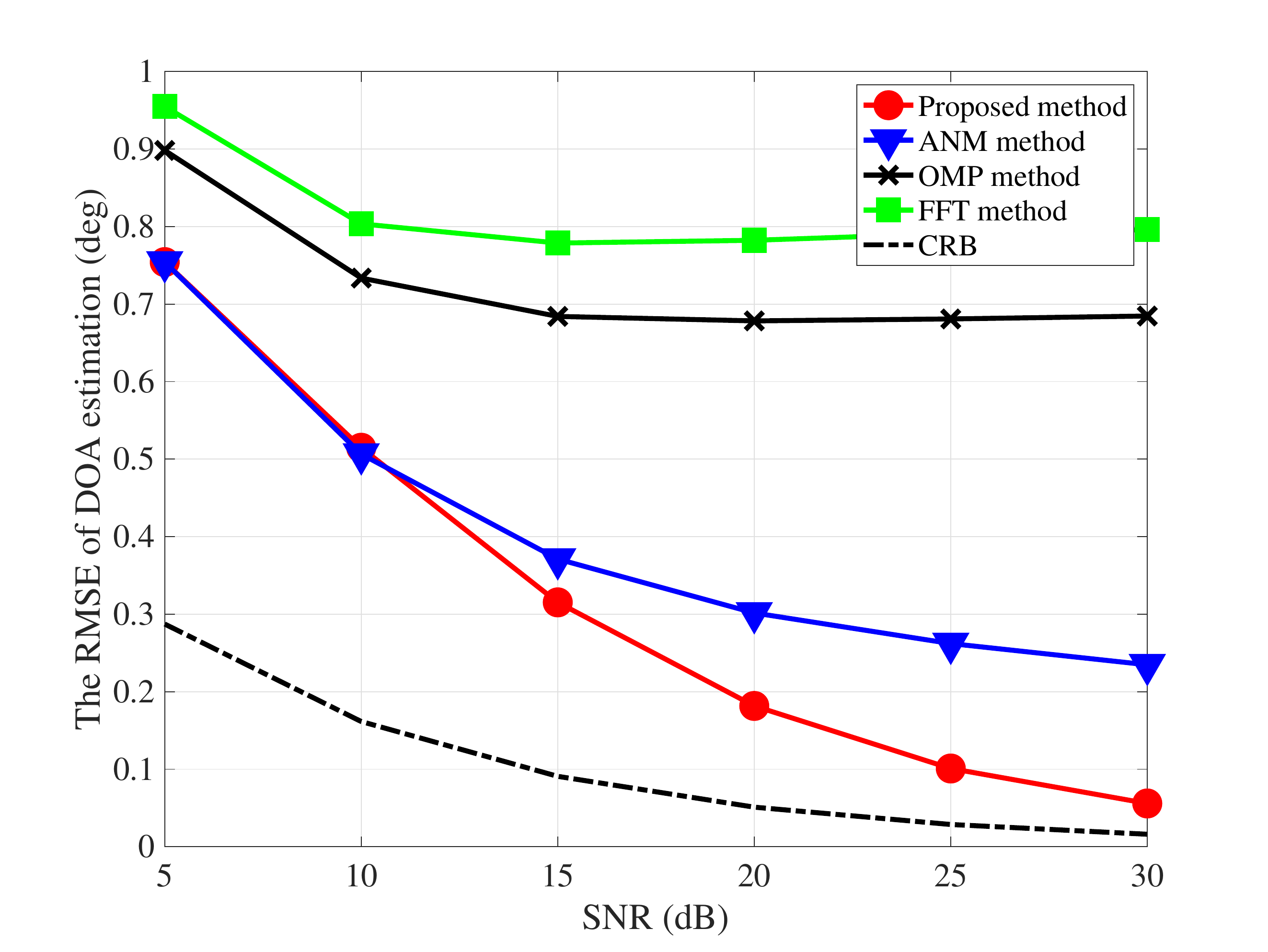}
	\caption{The RMSE of the DOA estimation with different SNRs.}
	\label{SNR}
\end{figure}

\begin{figure}
	\centering
	\includegraphics[width=2.3in]{./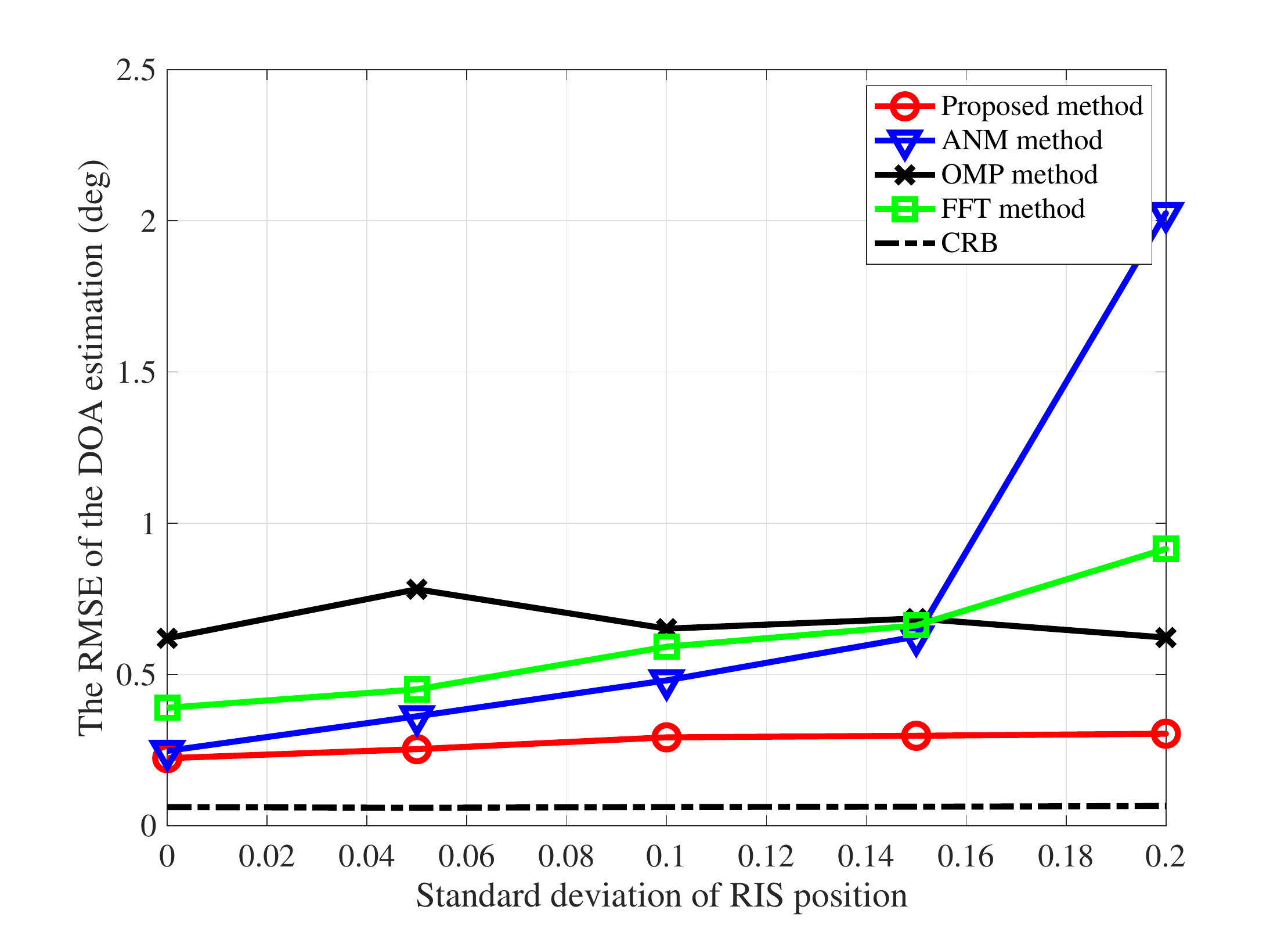}
	\caption{The RMSE of the DOA estimation with different standard deviations of RIS position.}
	\label{Per}
\end{figure} 

\begin{figure}
	\centering
	\includegraphics[width=2.1in]{./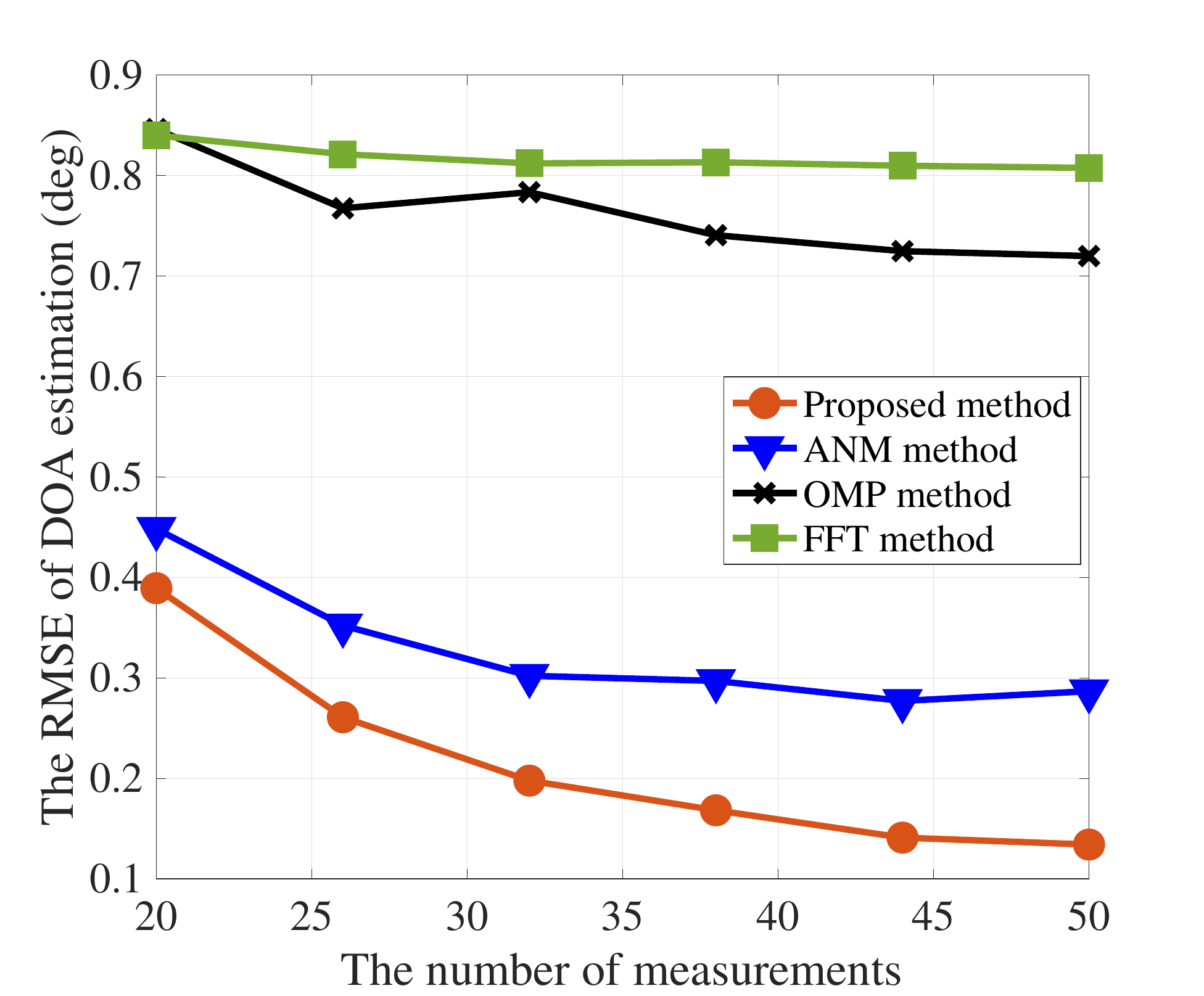}
	\caption{The RMSE of the DOA estimation with different numbers of measurements.}
	\label{mea}
\end{figure} 

By choosing the appropriate value of the hyperparameter, the DOA estimation performance with different SNRs is shown in Fig.~\ref{SNR}. Much better performance is achieved by the proposed method than the existing methods, especially with the SNR being greater than $15$ dB. Moreover, the estimation performance with different standard deviations of RIS position is also shown in Fig.~\ref{Per}. The better estimation performance is achieved by the proposed method. Furthermore, the estimation performance is not affected by the RIS position, so the proposed method is suitable for the NULRA. Moreover, the DOA estimation performance with different numbers of measurements is shown in Fig.~\ref{mea}. We can find that the DOA estimation performance is improved using more measurements, and the better estimation performance is achieved by the proposed method. The computational complexity of the proposed method is shown by the computational time. With the simulation parameters in Table~\ref{table1}, the computational time of the proposed method is $2.25$ s, that of the ANM method is $0.12$ s, that of the OMP method is $0.08$ s, and that of the FFT method is $2.31$ s. Hence, the computational complexity of the proposed method is also the same as the ANM method, which is much higher than OMP and FFT methods.

\section{Conclusions}\label{sec5}
 The DOA estimation method has been studied in the scenario with the NULRA. To exploit the target sparsity in the spatial domain, the atomic norm-based method has been proposed, where a new type of atomic norm is defined with the geometry structure of the RIS array. The SDP problem has been formulated with the transformation matrix to solve the atomic norm-based method, which is different from the existing methods. Simulation results have shown the efficiency of the proposed method in the DOA estimation problem using the NULRA. Future work will focus on the DOA estimation method with lower computational complexity and the jointly designed measurement matrix to optimize the receiving SNR and the performance of the sparse reconstruction.

\bibliographystyle{IEEEtran}
\bibliography{References.bib}

\end{document}